\documentclass[a4paper, twocolumn, 10pt, accepted=2023-10-20]{quantumarticle}
\pdfoutput=1
\usepackage[utf8]{inputenc}
\usepackage[T1]{fontenc}
\usepackage[english]{babel}
\usepackage{csquotes}\MakeOuterQuote"
\usepackage[numbers,sort&compress]{natbib}
\bibliographystyle{abbrvnat}
\usepackage{amsmath,amsfonts,amssymb,amsthm,bbm}
\usepackage[colorlinks]{hyperref}
\usepackage[capitalize]{cleveref}

 \newtheorem{theorem}{Theorem}
 \newtheorem{proposition}{Proposition}

 \newtheorem{lemma}[theorem]{Lemma}
 
 \newtheorem{definition}[theorem]{Definition}



\newcommand{\mc}[1]{\mathcal{#1}}

\newcommand{\mb}[1]{\mathbb{#1}}


\newcommand{\tr}{\mathrm{Tr}} 
\newcommand{\Tr}{\mathrm{Tr}} 

\newcommand{\id}{\mathbbm{1}}



\newcommand{\RR}{\mb{R}}



\newcommand{\norm}[1]{\left\Vert #1 \right\Vert}

\newcommand{\ket}[1]{|{#1}\rangle}

\newcommand{\bra}[1]{\langle{#1}|}

\newcommand{\braket}[2]{\langle #1 | #2 \rangle}

\newcommand{\ketbra}[2]{\ket{#1} \bra{#2}}

  \newcommand{\proj}[1]{\ketbra{#1}{#1}}

\def\id{{\mathbbm 1}}
\newcommand{\Id}{\mathrm{Id}}
\newcommand{\NN}{\mathbb{N}}

\begin{document}
\author{Lauritz van Luijk}
\affiliation{Leibniz Universit\"at Hannover, Appelstra\ss e 2, 30167 Hannover, Germany}
\orcid{0000-0003-3153-549X}
\author{Reinhard F. Werner}
\affiliation{Leibniz Universit\"at Hannover, Appelstra\ss e 2, 30167 Hannover, Germany}
\orcid{0000-0003-2288-468X}
\author{Henrik Wilming}\affiliation{Leibniz Universit\"at Hannover, Appelstra\ss e 2, 30167 Hannover, Germany}
\orcid{0000-0002-0306-7679}
\title{Covariant catalysis requires correlations and good quantum reference frames degrade little}

\begin{abstract}
Catalysts are quantum systems that open up dynamical pathways between quantum states which are otherwise inaccessible under a given set of operational restrictions while, at the same time, they do not change their quantum state.
We here consider the restrictions imposed by symmetries and conservation laws, where any quantum channel has to be covariant with respect to the unitary representation of a symmetry group, and present two results.
First, for an exact catalyst to be useful, it has to build up correlations to either the system of interest or the degrees of freedom dilating the given process to covariant unitary dynamics. 
This explains why catalysts in pure states are useless. Second, if a quantum system ("reference frame") is used to simulate to high precision unitary dynamics (which possibly violates the conservation law) on another system via a global, covariant quantum channel, then this channel can be chosen so that the reference frame is approximately catalytic.
In other words, a reference frame that simulates unitary dynamics to high precision degrades only very little. 
\end{abstract}
\maketitle

\section{Introduction}
When a quantum system $C$ takes part in a physical interaction with a different system $S$, its quantum state typically changes.
Moreover, the system $C$ typically becomes correlated to system $S$ (if the two weren't correlated, to begin with). 
Indeed, suppose contrarily that the joint time evolution of $SC$ is given by a unitary operator $U$ and the two systems end up uncorrelated, and the state of $C$ did not change:
\begin{align}
U \rho_S\otimes \sigma_C U^\dagger  = \rho'_S \otimes \sigma_C.
\end{align}
Then there also exists a unitary $V$ such that $V\rho_S V^\dagger=\rho'_S$.
Thus, the system $C$ is not required to realize the state-transition $\rho_S\rightarrow \rho'_S$ on $S$.

Once the operations we can implement in an experiment are restricted to some strict subset $\mc O$ of all conceivable quantum operations, however, there may be situations where the state-transition $\rho_S\rightarrow \rho_S'$ is \emph{impossible} via operations in $\mc O$, but 
the state-transition $\rho_S\otimes\sigma_C \rightarrow \rho_S'\otimes \sigma_C$ \emph{is possible} via operations in $\mc O$.
Thus, the mere presence of system $C$ seems to allow for larger sets of state transitions on $S$. 
The prototypical example is given by the set $\mc O_{\mathrm{LOCC}}$ of \emph{local operations and classical communication} (LOCC), where $S=AB$ and $C=C_AC_B$ each consist of two parts distributed between two players, Alice and Bob. 
Alice and Bob each can only perform quantum operations on their local systems but can communicate classically, for example, by a phone call. 
It was observed in Ref.~\cite{Jonathan1999} that for some pure quantum states $\ket{\psi}_{AB},\ket{\psi'}_{AB}$ and $\ket{\phi}_{C}$ the state-transition $\ket{\psi}_{AB}\otimes\ket{\phi}_C \rightarrow \ket{\psi'}_{AB}\otimes\ket{\phi}_C$ is possible even though $\ket{\psi}_{AB}\rightarrow \ket{\psi'}_{AB}$ is impossible via LOCC protocols. 
Ref.~\cite{Eisert2000} generalized this observation to mixed states. 
In such a situation, system $C$ plays the role of a \emph{catalyst} since it enables a state transition without changing its state itself.
In particular, it may be re-used to implement the same state transition on a new system $S'=A'B'$. 

In this work, we consider the role of catalysts when the operational restrictions are due to a set of conservation laws that need to be obeyed \cite{Janzing2006,Bartlett2007,Vaccaro2008,Gour2008}. 
As a consequence (see below), every quantum system $S$ carries a unitary representation $g\mapsto W_S(g)$ of a connected Lie group $G$. 
The set of implementable operations, which we denote by $\mc O_{\mathrm{symm.}}$, corresponds to \emph{covariant} quantum channels $T$. 
A quantum channel from system $S$ to system $S'$ is said to be covariant if it fulfills
\begin{align}
	T[W_S(g)\rho W_S(g)^\dagger] = W_{S'}(g)T[\rho]W_{S'}(g)^\dagger.	
\end{align}
A quantum state $\rho$ on $S$ is called \emph{symmetric} if $\rho = W_S(g)\rho W_S(g)^\dagger$ for all $g\in G$. A covariant quantum channel always maps symmetric states to symmetric states. 
Since all the operations we can implement are covariant, we can only prepare symmetric states. 
In such a setting, states that are not symmetric are valuable resources, as they can be used to implement
quantum channels that are not covariant: If $\omega_E$ is an asymmetric state on system $E$, and $T$ is a covariant quantum channel on $SE$, then
the channel
\begin{align}
	\rho\mapsto \tr_E[T[\rho\otimes\omega_E]]
\end{align}
is, in general, not covariant. 
In particular, a useful catalyst must be asymmetric; otherwise, the induced dynamics on $S$ were covariant and would not require the catalyst.

In this context, asymmetric states are also called \emph{quantum reference frames}, since they break the underlying symmetry and thereby allow to (partially) distinguish symmetry operations: 
If $\rho$ is asymmetric, then at least for some group elements $g\in G$ the state $\rho$ is (partially) distinguishable from $W_S(g)\rho W_S(g)^\dagger$.
An example of a reference frame for rotations is a collection of spins in strongly polarized states. 
A superposition (in contrast to an incoherent mixture) of energy eigenstates provides a reference frame for the group of time translations (a phase-reference in the case of equally spaced energy levels). 
A \emph{perfect} quantum reference corresponds to a quantum state $\rho$ that is perfectly distinguishable from $W_S(g)\rho W_S(g)^\dagger$, i.e.\ the trace-distance
\begin{align}
    D(W_S(g)\rho W_S(g)^\dagger,\rho) := \frac{1}{2}\norm{W_S(g)\rho W_S(g)^\dagger - \rho}_1
\end{align}
is equal to $1$ unless $g=1$. Perfect reference frames cannot exist for continuous groups:
If $W_S$ is a strongly continuous (possibly projective) unitary representation on a (possibly infinite-dimensional) Hilbert-space, then $g\mapsto D(W_S(g)\rho W_S(g)^\dagger,\rho)$ is a continuous function on $G$ and hence cannot be equal to the discontinuous map $g\mapsto 1- \delta_{1g}$  
\footnote{This argument even works in operator algebraic settings: If the system $S$ is described by a von Neumann algebra $\mathcal M$ with $\alpha:G \to \mathrm{Aut}(\mathcal M)$ a (normal) action of $G$, then $\norm{\alpha_g(\rho)-\rho}_{\mc M^*}$ is continuous for all normal states $\rho\in \mc M_*$.}.

This paper aims to show two results about catalysts in the context of conservation laws. 
First, we show that a useful catalyst must build up correlations to either the system of interest $S$ or the environmental degrees of freedom $E$ (or both) that dilate the covariant channel on $SC$ to a covariant unitary channel. 
In particular, if the process on $SC$ is already unitary, then $C$ must become correlated to $S$.
Since a quantum system in a pure state is always uncorrelated to any other system, catalysts must be mixed to be useful. This is in strong contrast to the setting of LOCC.
Our result generalizes previous results showing that pure catalysts are useless in the case where only energy is preserved \cite{Ding2021} or when the states of interest are pure \cite{Marvian2013}.
We discuss the role of correlations in catalysis further below. 

Our second result concerns the case where a quantum reference frame $C$ is used to locally implement approximately unitary dynamics on $S$ that may violate the conservation laws. 
Such a situation is quite common in a semi-classical limit, where, for example, a laser (described by a classical field) is used to induce a state transition of an atom from the ground state to the superposition of the ground state and an excited state, violating energy conservation.  
It is already known that a quantum reference frame that allows to locally implement approximately unitary dynamics must be much larger than $S$ and strongly asymmetric \cite{Ozawa2002,Bartlett2007,Gour2009,Marvian2012,Marvian2012a,Marvian2013,Aaberg2014,Miyadera2016,Miyadera2020,Tajima2018,Tajima2020,Marvian2021,Tajima2021,Tajima2022}.
In this sense it must be close to macroscopic.  
For example, a laser described by a (quantum) coherent state with high expected photon number is of the right form. 
It is also known that a quantum reference frame generically degrades in quality when it is used due to back-action \cite{Bartlett2006,Bartlett2007a,Poulin2007,Ahmadi2010}.
We show that a quantum reference frame that allows implementation of local unitary dynamics to high precision can always be used so that it is approximately catalytic: its quantum state is only slightly perturbed. 
Therefore, such a quantum reference frame is also macroscopic in the sense of receiving little back-action.
In particular, if it is initially in a pure state, it only becomes weakly correlated to $S$. 
The second result holds for covariant operations with respect to projective unitary representation of any group, not just connected Lie groups.

The paper is structured as follows: 
In section~\ref{sec:resource_theories} we provide the necessary background on the framework that we work in, namely
the resource theory of asymmetry. 
We also introduce the closely related resource theory of athermality as our first main result immediately transfers to that setting as well. 
In section~\ref{sec:correlations} we then discuss the role of correlations for catalysis and present our first main result and its proof, which builds on Wiegmann's theorem in matrix theory.
In section~\ref{sec:embezzlement} we discuss how quantum reference frames can be used to locally circumvent conservation laws and present our second main result together with its proof, which relies on the information-disturbance tradeoff in quantum mechanics.

\section{The resource theories of asymmetry and athermality}
\label{sec:resource_theories}

We now introduce the resource theory of asymmetry and then also the resource theory of athermality, which provides a model for thermodynamics. For more background on the resource theory of asymmetry, see Refs.~\cite{Bartlett2007,Gour2008,Marvian2012,Marvian2013} and references therein and for more background on the resource theory of athermality, see Refs.~\cite{Janzing2000,Horodecki2013,Brandao2013,Brandao2015,Gour2015,Faist2015,Gallego2016,YungerHalpern2016}.

The resource theory of asymmetry formalizes the consequences of symmetries and conservation laws 
for the manipulation of quantum systems. 

Symmetries are implemented in quantum physics by projective unitary representations of a symmetry group $G$  on Hilbert-space \footnote{In fact, to be fully general, one should also consider anti-unitary operators, but here we restrict to (projective) unitary representations.}. 
It is then required that the Hamiltonian generating the time-evolution commutes with this representation.
If the symmetry group $G$ is a Lie group, then its Lie algebra $\mathfrak{g}$ is the set of infinitesimal generators of the group. We can view an infinitesimal generator $X$ as an abstract conserved quantity because its representative $X_S$ on a system $S$ is a Hermitian operator commuting with the system Hamiltonian $H_S$.
Notationally, we distinguish an element $X$ of the Lie algebra and its Hermitian representative $X_S$ on a system $S$ only by the label, so that $W_S(\mathrm e^{X})=\mathrm{e}^{-\mathrm i X_S}$.

In the resource theory of asymmetry with respect to a group $G$, it is assumed that every system $S$ carries a projective unitary representation $g\mapsto W_S(g)$ of $G$ commuting with its Hamiltonian $H_S$, with the consistency condition that any two separate systems, $S_1$ and $S_2$, carry jointly the tensor-product representation $g\mapsto W_{S_1}(g)\otimes W_{S_2}(g)$ when considered as a single system $S=S_1S_2$. 
Then any unitary $U$ realizing a physical operation on a (possibly multi-partite) system $S$ must commute with the representation:
\begin{align}
	[U, W_S(g)]=0\quad \forall g\in G.
\end{align}
In the case of a connected Lie group, this is equivalent to
\begin{align}
	[U,X_S]=0\quad \forall X\in \mathfrak g.
\end{align}
Therefore, all $X_S$, with $X\in \mathfrak g$, are indeed conserved quantities.
In the case of finite-dimensional Hilbert-spaces and for a connected Lie group $G$, we can, without loss of generality, assume that all systems are equipped with proper unitary representations \cite{bargmann}. 
For this, we pass to the universal covering group of $G$, i.e.\ the unique simply-connected group generated by the Lie algebra $\mathfrak{g}$.
For a compound system $S=S_1S_2$ we have
\begin{align}
    X_S=	X_{S_1}\otimes \id_{S_2} + \id_{S_1} \otimes X_{S_2} \quad X\in \mathfrak g.
\end{align}

We may further enrich the set of possible operations $\mc O_{\mathrm{symm}}$ by assuming that we can prepare "for free" arbitrary systems $E$ in an arbitrary \emph{symmetric density matrix} $\omega_E$, which fulfills $[\omega_E,W_E(g)]=0$ for all $g\in G$. 
Symmetric states may be interpreted as those which can be prepared without access to a reference frame for the group \cite{Bartlett2007}.
By further being able to discard subsystems (formally tracing them out), it is then possible to implement any quantum channel $T$ between two systems $S$ and $S'$ that is \emph{covariant}:
\begin{align}
	T[W_S(g) \rho_S W_S(g)^\dagger] &= W_{S'}(g) T[\rho_S] W_{S'}(g)^\dagger \quad \forall g\in G. 
\end{align}
Indeed, the \emph{covariant Stinespring theorem} states that any covariant quantum channel $T$ can be written as
\begin{align}
	T[\rho] = \tr_E[U \rho\otimes \omega_E U^\dagger],
\end{align}
where $\omega_E$ is a symmetric state that, in fact, may be assumed to be pure \cite{Keyl1999}. 
The preparation of a symmetric state on $S$ can be seen as a covariant quantum channel from a trivial system to $S$.

Summarizing, the resource theory of asymmetry consists of quantum channels that are covariant with respect to the (projective) unitary representations of a symmetry group $G$.
As already mentioned in the introduction, any quantum state that is not symmetric can be understood as a reference frame for the group $G$ and may be utilized to implement non-covariant quantum channels via a covariant unitary $U$:
If $[\omega_E,W_E(G)]\neq 0$, then in general the quantum channel
\begin{align}
	\rho\mapsto \tr_2[U\rho\otimes\omega_E U^\dagger]
\end{align}
is not covariant, even though $U$ commutes with $W_S(g)\otimes W_E(g)$ for all $g\in G$.

Let us now discuss the resource theory of athermality. It is closely related to the resource theory of asymmetry for the group $G=\mathbb R$ of time translations,
which enforces strict energy conservation in the sense that any unitary operation has to commute with the full Hamiltonian.
In this case, the group representation is simply given by unitary time-evolution with the system's Hamiltonian, $t\mapsto W_S(t) = \exp(-\mathrm it H_S)$. 
In contrast to the resource theory of asymmetry where arbitrary symmetric states can be prepared "for free", the resource theory of athermality only allows for the Gibbs states at some fixed inverse temperature $\beta >0$, i.e.\ states of the form $\omega_S = \exp(-\beta H_S)/Z$ with $Z=\Tr[\exp(-\beta H_S)]$, instead of arbitrary symmetric states.
The underlying idea is that such systems represent heat baths at some fixed environment temperature. 
Therefore, we also only allow for those covariant quantum channels $T$ that can be realized by appending a Gibbs state of an arbitrary Hamiltonian $H_E$, applying an energy-conserving unitary and tracing out a subsystem:
\begin{align}
    T[\rho] = \tr_E[U \rho\otimes\omega_E U^\dagger],
\end{align}
where $\omega_E=\exp(-\beta H_E)/Z$ and $[U,H_S\otimes\id_E + \id_S\otimes H_E]=0$.
We refer to the resulting quantum channels as "thermal operations" \cite{Horodecki2013}.
Any thermal operation has the Gibbs state at the fixed inverse temperature $\beta$ as a fixed-point: $T[\exp(-\beta H_S)] = \exp(-\beta H_S)$. 
This can be seen as enforcing the second law of thermodynamics: without spending resources, we cannot produce a non-equilibrium state from thermal equilibrium. 
Any unitary quantum channel that is covariant with respect to time translations automatically has the Gibbs state as a fixed point and is a valid quantum channel in this framework.

\section{Correlations and catalysis}
\label{sec:correlations}
We now discuss the role of correlations for catalysis and discuss our first main result. 
Traditionally, catalysis in the context of quantum mechanics, in particular LOCC, always referred to a situation where $\rho_S\rightarrow \rho'_S$ is impossible under the given set of operations $\mc O$, but $\rho_S\otimes\sigma_C\rightarrow \rho'_S\otimes \sigma_C$ is possible.
In other words, no correlations between $S$ and $C$ were allowed.
This assumption can be justified by considering several catalytic transformations. For example, assume $S$ interacts with $C_1$ in such a way that $S$ and $C_1$ became correlated, but $C_1$ did not change its state. Afterwards, $S$ interacts with $C_2$ in a similar fashion.
Then it will typically be the case that $C_1$ and $C_2$ become correlated. Therefore, while each is catalytic on their own, the joint-system $C_1C_2$ is not catalytic.
Similarly, if a second system $S'$ interacts with $C_1$ in the same way as $S$ did, then afterwards, the reduced density matrix of $S'$ will be precisely the same as that of $S$. However, the two will typically be correlated \cite{Vaccaro2018,Boes2019}.
In other words, once we allow for the build-up of correlations, these correlations will typically spread in an uncontrolled way (the only way to completely prevent this is to demand that the states of catalysts are pure).

However, assume now that the interaction of $S$ and $C_1$ is not unitary so that auxiliary degrees of freedom $E$ are involved, as is typically the case in a resource theory. 
Then even though these degrees of freedom may be considered "for free", they can similarly result in uncontrolled spreading of correlations if $C_1$ (or $S$ for that matter) becomes correlated to them. 
Hence, if one does not restrict correlations with $E$, why restrict those between $S$ and $C$?
Moreover, $E$ is typically treated implicitly so that the correlations to $E$ do not appear in the mathematical treatment. 

On the other hand, it has recently been shown in a number of works that it can be quite beneficial to allow for correlations between catalyst and system as they can significantly enhance the set of reachable states from a given initial state \cite{Mueller2018,Rethinasamy2020,Shiraishi2021,Boes2019,Wilming2021,LipkaBartosik2021,Kondra2021,Wilming2022}, even if no auxiliary degrees of freedom $E$ are allowed for \cite{Boes2019,Wilming2021,Wilming2022}. 
In fact, oftentimes such \emph{correlated catalysis} expands the set of reachable states for a single system to the asymptotic set of reachable states in the thermodynamic limit: Suppose for any $\epsilon>0$ there exists a number $n$ such that the state transition $\rho^{\otimes n}\rightarrow \rho'_{n,\epsilon}$ is possible without a catalyst, where $\rho'_{n,\epsilon}$ approximates ${\rho'}^{\otimes n}$ up to error $\epsilon$ in trace-distance. 
It is quite common that such a situation occurs even though the transition $\rho\rightarrow \rho'$ is impossible with good accuracy.
The asymptotic limit of many uncorrelated copies therefore strongly regularizes possible state-transitions.
The same regularization can be achieved with correlated catalysis: In the setting described above, for any $\epsilon>0$ a state-transition $\rho\rightarrow \rho'_\epsilon$ is possible with correlated catalysis, where $\rho'_\epsilon$ is $\epsilon$-close to $\rho'$ in trace-distance.
Moreover, often these beneficial effects are possible while building up arbitrary little correlations between $S$ and $C$ as measured by the trace distance to the closes product state.
At the same time, it is typically unknown how strongly $C$ becomes correlated to $E$.

These considerations motivate our first main result. 
It shows that a catalyst in the resource theories of asymmetry and athermality can only be useful if it either becomes correlated to $S$ or $E$ (or both).

\begin{theorem}[Correlations are neccessary for catalysis]\label{thm:finite}
	Let $G$ be a connected Lie group and let $g\mapsto W_S(g)$, $g\mapsto W'_S(g)$ and $g\mapsto W_C(g)$ be finite-dimensional unitary representations on systems $S$ and $C$, respectively.
	Let $\rho_S,\rho'_S$ be density matrices on $S$ and $\sigma_C$ a density matrix on $C$. Finally, suppose a unitary $U$ on $SC$ fulfills
	\begin{align}
		U\rho_S\otimes \sigma_C U^\dagger = \rho'_S\otimes \sigma_C\label{eq:thm1}
	\end{align}
	and for all $g\in G$ 
	\begin{align}
		U W_S(g)\otimes W_C(g) = W_S'(g)\otimes W_C(g) U. \label{eq:thm2}
	\end{align}
	Then there exists a unitary $V$ on $S$ such that $V\rho_S V^\dagger  = \rho'_S$ and $V W_S(g) = W'_S(g) V$ for all $g\in G$.
\end{theorem}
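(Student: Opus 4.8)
The plan is to encode the two hypotheses as a family of trace identities for ``words'' built out of $\rho_S$ and the representation operators, and then conclude with Wiegmann's trace criterion for simultaneous unitary similarity. Since $G$ is a connected Lie group and all Hilbert spaces are finite-dimensional, one first passes to the universal covering group so that $W_S$, $W'_S$, $W_C$ are genuine (not merely projective) unitary representations; these are then real-analytic functions of the group element. For a choice of $g_1,\dots,g_k\in G$ and integers $a_1,\dots,a_k\geq 0$, set $w(X,Y):=Y(g_1)X^{a_1}Y(g_2)X^{a_2}\cdots Y(g_k)X^{a_k}$. Using $(A\otimes B)(C\otimes D)=AC\otimes BD$ repeatedly,
\[
  w(\rho_S\otimes\sigma_C,\,W_S\otimes W_C)=w(\rho_S,W_S)\otimes w(\sigma_C,W_C),
\]
so that the trace over $SC$ factorizes into $\tr_S[w(\rho_S,W_S)]\cdot\tr_C[w(\sigma_C,W_C)]$, and likewise with $\rho'_S,W'_S$ in place of $\rho_S,W_S$.

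Next I use \eqref{eq:thm1} and \eqref{eq:thm2}. Conjugating $w(\rho_S\otimes\sigma_C,\,W_S\otimes W_C)$ by $U$ and inserting $U^\dagger U$ between consecutive factors, \eqref{eq:thm2} turns each $W_S(g_i)\otimes W_C(g_i)$ into $W'_S(g_i)\otimes W_C(g_i)$, while the $a_i$-th power of \eqref{eq:thm1} turns each $(\rho_S\otimes\sigma_C)^{a_i}$ into $(\rho'_S\otimes\sigma_C)^{a_i}$. Taking the trace and using its cyclicity,
\[
  \tr_S[w(\rho_S,W_S)]\,\tr_C[w(\sigma_C,W_C)]=\tr_S[w(\rho'_S,W'_S)]\,\tr_C[w(\sigma_C,W_C)].
\]
Now $(g_1,\dots,g_k)\mapsto\tr_C[w(\sigma_C,W_C)]$ is real-analytic on the connected manifold $G^{k}$ and does not vanish identically: at $g_1=\dots=g_k=1$ it equals $\tr[\sigma_C^{\,a_1+\dots+a_k}]>0$ (or $\dim\mcH_C>0$ if the word contains no factor of $X$). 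Hence it is nonzero on a dense open subset of $G^{k}$, where the scalar factor can be cancelled; and since $(g_1,\dots,g_k)\mapsto\tr_S[w(\rho_S,W_S)]-\tr_S[w(\rho'_S,W'_S)]$ is continuous, it must then vanish on all of $G^{k}$. Thus $\tr_S[w(\rho_S,W_S)]=\tr_S[w(\rho'_S,W'_S)]$ for \emph{every} word $w$.

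Because $\rho_S^\dagger=\rho_S$ and $W_S(g)^\dagger=W_S(g^{-1})$, the adjoint of such a word is again a word of the same type, so these identities say precisely that the families $(\rho_S,\{W_S(g)\}_{g\in G})$ and $(\rho'_S,\{W'_S(g)\}_{g\in G})$ of operators on $\mcH_S$ have equal traces of all words. By Wiegmann's theorem on simultaneous unitary similarity, there is then a single unitary $V$ on $S$ with $V\rho_S V^\dagger=\rho'_S$ and $VW_S(g)V^\dagger=W'_S(g)$ for all $g\in G$ (the infinite family being reducible to a finite one since the generated $*$-algebra is finite-dimensional), which is the assertion.

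The step I expect to be the crux is the cancellation of the factor $\tr_C[w(\sigma_C,W_C)]$: this is where connectedness of $G$ is genuinely used, through real-analyticity of $g\mapsto W_C(g)$ and connectedness of $G^{k}$, and one can check that the statement fails for disconnected groups — already for $G=\mathbb{Z}_2$ there are asymmetric pure catalysts implementing transitions that no symmetric unitary on $S$ realizes. Everything else is bookkeeping with tensor products and cyclicity of the trace, together with the appeal to Wiegmann's criterion.
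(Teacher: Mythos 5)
Your proof is correct, and it rests on the same two pillars as the paper's argument: Wiegmann's trace criterion for simultaneous unitary equivalence, and an identity-theorem argument that cancels the scalar $C$-factor in the factorized word trace. The implementation, however, is genuinely different. The paper first replaces the group by a Hermitian basis $X^{(1)},\dots,X^{(m)}$ of its Lie algebra, exponentiates to get strictly positive matrices $\exp(-X^{(i)})$, and then, for a fixed word, analytically continues in the \emph{real exponents} of its letters (Lemma~\ref{lemma}); the non-vanishing of the $C$-factor is checked at $s=0$, which is where the full-rank hypothesis of Proposition~\ref{prop:main} and the care with $0^0$ and support projections enter. You instead keep the \emph{group elements} as the analytic variables: $(g_1,\dots,g_k)\mapsto\Tr[W_C(g_1)\sigma_C^{a_1}\cdots W_C(g_k)\sigma_C^{a_k}]$ is real-analytic on the connected manifold $G^k$ and equals $\Tr[\sigma_C^{\sum_j a_j}]>0$ at the identity, so its zero set has empty interior and the cancellation goes through by continuity. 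This bypasses Proposition~\ref{prop:main}, the functional-calculus subtleties, and the rank conditions entirely, at the price of invoking real-analyticity of finite-dimensional Lie-group representations; connectedness of $G$ does the same essential work in both versions (it enters your proof through connectedness of $G^k$, and the paper's through the reduction to Lie-algebra generators). Two points you should make explicit rather than parenthetical: (i) Wiegmann's theorem is stated for finite tuples, so the infinite family $\{W_S(g)\}_{g\in G}$ must be reduced to a finite one --- for instance, choose $g_1,\dots,g_N$ so that the operators $W_S(g_i)\oplus W'_S(g_i)$ span the linear span of $\{W_S(g)\oplus W'_S(g):g\in G\}$, which guarantees that a unitary intertwining the chosen finitely many pairs intertwines all of them; alternatively use that the nonempty compact sets of unitaries intertwining finite subfamilies have the finite-intersection property; (ii) Wiegmann requires words in the operators \emph{and their adjoints}, which your closure observation ($\rho_S^\dagger=\rho_S$, $W_S(g)^\dagger=W_S(g^{-1})$, and products of the $W_S$'s collapsing to a single $W_S$ by the representation property) correctly disposes of. With those two remarks spelled out, the argument is complete.
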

The non-trivial statement of the theorem is that the unitary $V$ can also be chosen to intertwine the two group representations $W_S$ and $W'_S$.
The theorem can be interpreted most clearly when we substitute the system $S$ in the statement of the theorem by a bipartite system $SE$,
where $E$ contains the degrees of freedom required to dilate a covariant quantum channel $T$ on $SC$ to a unitary channel:
\begin{align}
	T[\rho_S\otimes \sigma_C]=\tr_E[U \rho_S\otimes\omega_E\otimes\sigma_C U^\dagger].
\end{align}
Here, $\omega_E$ is a general symmetric state in the resource theory of asymmetry and a Gibbs state in the resource theory of athermality. Theorem~\ref{thm:finite} then states that if a system $C$ takes part in a process but neither experiences back-action (does not change its density matrix $\sigma_C$) nor becomes correlated to $SE$, then 
the system $C$ was not necessary to realize the state transition on $S$ (and $E$). This is because the covariant quantum channel $T'$ defined by
\begin{align}
	T'[\rho_S] = \tr_E[V \rho_S\otimes\omega_E V^\dagger]
\end{align}
has the same effect on $S$. Moreover, $T'$ is also a thermal operation if $T$ was.
In other words: For a catalyst to be useful in realizing otherwise impossible state transitions, it must become correlated
to either $S$, $E$, or both, and this statement holds both in the resource theory of asymmetry for connected Lie groups and in the resource theory of athermality. Note that the usual setting of uncorrelated catalysis allows for correlations between $S$ and $E$ as well as between $C$ and $E$ but not between $C$ and $S$.

Therefore, a catalyst in a pure state is useless in contrast to the setting of LOCC.
While this statement is known in the cases of either a single conserved quantity (i.e.\ $G=\RR$) \cite{Ding2021} or the case where $\rho_S$ is pure \cite{Marvian2013}, our result shows that it holds for any finite number of conserved quantities and any state $\rho_S$.
More importantly, it explains why this is true: A catalyst in a pure state cannot be correlated to another system.
If we interpret $C$ as a reference frame, this means that a reference frame that is actually used (even only once) must either degrade or become correlated to other systems.

One may wonder whether there actually are situations where it is helpful to have a system $C$ that only becomes correlated to $S$ but does not experience back-action. 
This is indeed the case: There are state transitions $\rho_S\rightarrow \rho'_S$ that can be realized in this way, but that cannot be realized using only covariant operations \cite{Ding2021}.
However, it is impossible to transform a symmetric state to an asymmetric state in this way, as shown by so-called no-broadcasting theorems \cite{Marvian2019,Lostaglio2019}.
One can also construct examples in the resource theory of athermality where an asymmetric (coherent) catalyst that becomes correlated to $S$ allows for state transitions that are impossible to implement without a catalyst \cite{Patryk}.

It is important to emphasize that our results hold for one particular choice of density matrices: We do \emph{not} require that $\sigma_C$ remains unchanged and uncorrelated for \emph{all} possible density matrices on $S$, but only for the specific density matrix $\rho_S$. 
We have mentioned above that a system $C$ in a quantum state that is not symmetric with respect to a group representation can be viewed as an (imperfect) reference frame for the group:
By interacting with it using only covariant quantum channels, one can implement some non-covariant quantum channels on a system $S$.
It is already known that in this case, for at least some state on $S$ (including the maximally mixed state), the final state on $C$ cannot be brought back exactly to its initial state by a covariant quantum channel \cite{Marvian2016}. 
This means that the reference frame "degrades" for such inputs. 
Conversely, if the reference frame does not degrade for any input on $S$, then the dynamics on $S$ is already covariant. 
The degradation of quantum reference frames has been studied in several concrete examples, see Refs.~\cite{Bartlett2006,Bartlett2007a,Poulin2007,Ahmadi2010}.

Theorem \ref{thm:finite} then can also be interpreted as showing that whenever one uses a quantum reference beneficially, it must either experience some back-action (change its density matrix) or become correlated to the system of interest $S$. 
In particular, back-action is unavoidable if the reference is initially in a pure state.

In section~\ref{sec:embezzlement} we will see a counter-point to this result: if a quantum reference frame is of such high quality that it allows implementing approximately unitary dynamics on $S$, then it suffers only mild back-action.

The amount of correlations that the catalyst $C$ establishes with the system and environment $SE$ can be measured in terms of the \emph{mutual information} $I(C:SE)$. 
From the unitary invariance of the von Neumann entropy $H$ it then follows that
\begin{align}
	I(C:SE)_{U \rho_{SE}\otimes \sigma_C U^\dagger} = H(\rho'_{SE}) - H(\rho_{SE}).
\end{align}
Thus, in the context of the resource theory of asymmetry, a catalyst is only useful if the entropy on $SE$ strictly increases.
Similarly, one can check that the rank of $\rho'_{SE}$ cannot be smaller than that of $\rho_{SE}$. 
Recently, it was shown in Refs.~\cite{Boes2019,Wilming2021,Wilming2022} that if $\rho_S$ and $\rho'_S$ are two finite-dimensional density matrices such that $H(\rho_S)<H(\rho'_S)$ and $\mathrm{rank}(\rho_S)\leq \mathrm{rank}(\rho'_S)$,
then there exists a finite-dimensional density matrix $\sigma_C$ and a unitary $U$ such that 
\begin{align}
	\tr_2[U\rho_S\otimes \sigma_C U^\dagger] = \rho'_S,\quad \tr_1[U\rho_S\otimes\sigma_C U^\dagger]=\sigma_C.
\end{align}
However, as far as we are aware, it is currently an open problem to decide when $U$ and $\sigma_C$ exist under the additional requirement
that $C$ carries a unitary representation of a given group $G$, and $U$ intertwines the respective representations as in Theorem~\ref{thm:finite} (see Refs.~\cite{Marvian2013,Gour2018,Alexander2022} for progress on the general state-convertibility problem in the resource theory of asymmetry).

The assumption of a {\it connected} Lie group in the theorem is necessary. The theorem fails for any non-connected Lie group $G$:
Let $G$ be a Lie group with finitely many connected components, and let $H$ be the quotient group $G/G_0$ w.r.t.\ the connected component of the neutral element (which is always a normal subgroup). 
Note that $H$ is a finite group and consider the Hilbert space $\mc H_C=\ell^2(H)$. 
We may view the left regular representation of $H$, given by $W_C(x) \ket y =\ket{xy}$, $x,y\in H$, as a unitary representation of $G$ (with kernel equal to $G_0$). 
Given states, $\rho$ and $\sigma$ on any system $S$ which carries a representation $W_S$ of $G$ with $G_0 \subset \ker W_S$, i.e., a system where $W_S$ is essentially a representation of $H$, one can always find a covariant quantum channel $\mc E$ on $SC$ which maps $\rho\otimes\ketbra xx$ to $\sigma\otimes\ketbra xx$ for some arbitrary but fixed $x\in H$. For example, this channel could be given by \cite{Marvian2013}
\begin{multline}
    \mc E[X] = \sum_{y\in H} \tr[(\id\otimes \ketbra yy) X] \times \ldots\\[-5pt]
     \times W_C(yx^{-1}) \sigma W_C(x^{-1}y)\otimes\ketbra yy.
\end{multline}
The reason why such a construction is possible is that the states $\ket x$ for $x\in H$ are perfect reference frames for $H$: $\braket{x}{y} = \delta_{xy}$ for all $y\in H$.
For general finite groups $G$, any quantum channel on $S$ can be realized by a covariant quantum channel on $SC$ if $C$ carries the left regular representation of $G$. 
Concretely, let $T[\cdot]=\tr_E[V\sigma_E\otimes(\cdot) V^\dagger]$ be the channel to be implemented. Define $V_y := (\id_E\otimes W_S(y))V(\id_E \otimes W_S(y)^\dagger)$. Then the covariant channel
\begin{align}
    \mc E(X) = \sum_y \tr_{E}[(V_y \otimes \bra y) \sigma_E\otimes X (V_y^\dagger\otimes\ket y)]\otimes \proj{y}\nonumber
\end{align}
fulfills
\begin{align}
    \mc E[\rho\otimes\proj{1}] = T[\rho]\otimes\proj 1.
\end{align}

\subsection{The proof}

We will now prove Theorem~\ref{thm:finite}.
The important observation is that since we consider connected Lie groups $G$, the infinitely many equations \eqref{eq:thm2} reduce to a finite set of equations. 
Namely, let $\{X^{(i)}\}_{i=1}^{\mathrm{dim}(G)}$ be a (Hermitian) basis for the Lie algebra of $G$ (corresponding to conserved quantities).
The $X^{(i)}$ are represented on the systems $S$ and $C$ by the given representations as $X^{(i)}_S$, $Y^{(i)}_S$ and $X^{(i)}_C$, respectively.
(That is, $Y^{(i)}_S$ represents $X^{(i)}$ on system $S$ as induced by the representation $W'_S$.)
We also introduce the strictly positive matrices
\begin{align}
	\omega_S^{(i)}\otimes \omega_C^{(i)} &:= \exp(-X^{(i)}_S)\otimes\exp(-X^{(i)}_C) ,  \nonumber\\
	\hat \omega_S^{(i)}\otimes\omega_C^{(i)} &:= \exp(-Y^{(i)}_S)\otimes\exp(-X^{(i)}_C).  \nonumber
\end{align}
Up to normalization, we may interpret the operators $\omega_{S/C}^{(i)}$ as Gibbs states of the associated conserved quantities at temperature $T=1$. The eqs.~\eqref{eq:thm1} and \eqref{eq:thm2} are equivalent to the $\mathrm{dim}(G)+1$ equations 
\begin{align}
	U \rho\otimes \sigma U^\dagger &= \rho'\otimes \sigma\\
	U \omega_S^{(i)}\otimes \omega_C^{(i)} U^\dagger &= \hat\omega_S^{(i)}\otimes \omega_C^{(i)}.
\end{align}
To arrive at this result, simply observe that $[U,X]=0$ if and only if $[U,\exp(X)]=0$ for a unitary $U$ and a Hermitian matrix $X$.
Thus, we have a system of $m+1=\mathrm{dim}(G)+1$ equations of the form
\begin{align}
	U A^{(i)} \otimes C^{(i)} U^\dagger = B^{(i)}\otimes C^{(i)},
\end{align}
where $A^{(i)}$, $B^{(i)}$ and $C^{(i)}$ are positive operators.

Theorem \ref{thm:finite} is now an immediate consequence of the following proposition:
\begin{proposition}\label{prop:main}
	Let $\{A_i = A^{(i)}\otimes C^{(i)}\}_{i=0}^m$ and $\{B_i = B^{(i)}\otimes C^{(i)}\}_{i=0}^m$ be two sets of positive semi-definite matrices such that all matrices with $i\geq 1$ have full rank.
	Then if $UA_i U^\dagger =B_i$ for a unitary $U$ there exists a unitary $V$ such that $V A^{(i)} V^\dagger = B^{(i)}$.
\end{proposition}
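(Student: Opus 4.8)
The plan is to reduce the proposition to Wiegmann's theorem — the simultaneous analogue of Specht's criterion, which asserts that two tuples $(M_0,\dots,M_m)$, $(N_0,\dots,N_m)$ of matrices on a common space satisfy $VM_iV^\dagger=N_i$ for a single unitary $V$ if and only if $\tr\,w(M_0,M_0^\dagger,\dots,M_m,M_m^\dagger)=\tr\,w(N_0,N_0^\dagger,\dots,N_m,N_m^\dagger)$ for every non-commutative word $w$. Taking $M_i=A^{(i)}$, $N_i=B^{(i)}$ on $\mathcal{H}_S$ and using that these matrices are positive, hence self-adjoint (so that daggers may be dropped), it suffices to prove
\begin{equation}\label{eq:plan-wt}
\tr\big(A^{(i_1)}A^{(i_2)}\cdots A^{(i_k)}\big)=\tr\big(B^{(i_1)}B^{(i_2)}\cdots B^{(i_k)}\big)
\end{equation}
for every finite sequence $i_1,\dots,i_k\in\{0,\dots,m\}$. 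Wiegmann's theorem then returns a unitary $V$ on $\mathcal{H}_S$ with $VA^{(i)}V^\dagger=B^{(i)}$ for all $i$; the case $i=0$ is the assertion, and in the application to Theorem~\ref{thm:finite}, where $A^{(i)}=\mathrm{e}^{-X^{(i)}_S}$ and $B^{(i)}=\mathrm{e}^{-Y^{(i)}_S}$ for $i\ge1$, injectivity of $\exp$ on Hermitian matrices turns the cases $i\ge1$ into $VX^{(i)}_SV^\dagger=Y^{(i)}_S$, i.e.\ into the intertwining of $W_S$ and $W'_S$.

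To obtain \eqref{eq:plan-wt} from the hypothesis $UA_iU^\dagger=B_i$, the key device is to raise each positive matrix $A_i=A^{(i)}\otimes C^{(i)}$ to an arbitrary positive power — heuristically, to regard the $A_i$ as unnormalised Gibbs operators and vary their temperatures independently. Functional calculus gives $A_i^{s}=U^\dagger B_i^{s}U$ for all $s>0$, and $A_i^{s}=(A^{(i)})^{s}\otimes(C^{(i)})^{s}$. Assigning a parameter $s_i>0$ to the index $i$, setting $\mathbf{s}=(s_0,\dots,s_m)$, and telescoping $A_i^{s}=U^\dagger B_i^{s}U$ along the product shows that $\prod_l(A^{(i_l)})^{s_{i_l}}\otimes(C^{(i_l)})^{s_{i_l}}$ is conjugate by $U$ to the same expression in the $B^{(i)}$; taking traces and factoring through $\tr(X_1\otimes Y_1\cdots X_k\otimes Y_k)=\tr(X_1\cdots X_k)\,\tr(Y_1\cdots Y_k)$ gives, for all $\mathbf{s}\in(0,\infty)^{m+1}$,
\begin{equation}\label{eq:plan-can}
a_I(\mathbf{s})\,c_I(\mathbf{s})=b_I(\mathbf{s})\,c_I(\mathbf{s}),
\end{equation}
where $a_I(\mathbf{s})=\tr\big(\prod_l(A^{(i_l)})^{s_{i_l}}\big)$ and $b_I,c_I$ are built in the same way from the $B^{(i)}$ and the $C^{(i)}$. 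All three are real-analytic on $(0,\infty)^{m+1}$ (for $i\ge1$ the $C^{(i)}$ are invertible; for $i=0$ one works on $\operatorname{ran} C^{(0)}$), and \eqref{eq:plan-wt} is exactly $a_I(\mathbf{1})=b_I(\mathbf{1})$.

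The only genuinely delicate point — the step I expect to be the main obstacle — is that one cannot simply cancel $c_I$ in \eqref{eq:plan-can} at $\mathbf{s}=\mathbf{1}$: a product of positive matrices need not have nonzero, or even real, trace, so $c_I(\mathbf{1})=\tr(C^{(i_1)}\cdots C^{(i_k)})$ may vanish. I would circumvent this by showing $c_I\not\equiv0$ on $(0,\infty)^{m+1}$. Pick an index $\iota$ occurring in $I$, choosing $\iota=0$ if $0$ occurs and otherwise any (then necessarily full-rank) index; letting $s_i\to0^+$ for all $i\ne\iota$ sends each non-$\iota$ factor $(C^{(i)})^{s_i}$ to the identity, so $c_I\to\tr\big((C^{(\iota)})^{p s_\iota}\big)>0$, where $p\ge1$ is the multiplicity of $\iota$ in $I$ and $C^{(\iota)}\ne0$ (it is full rank, or it is $C^{(0)}$, which may be assumed nonzero — otherwise $A_0=B_0=0$ and nothing links $A^{(0)}$ to $B^{(0)}$). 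Thus $c_I$ is strictly positive at a boundary point of $(0,\infty)^{m+1}$ and hence, by continuity, on a nonempty open subset; since $c_I$ is real-analytic, $\{c_I\ne0\}$ is also dense, on it \eqref{eq:plan-can} forces $a_I=b_I$, and by continuity of $a_I-b_I$ this holds everywhere, in particular at $\mathbf{1}$. This establishes \eqref{eq:plan-wt}; the final appeal to Wiegmann's theorem then closes the argument, the reduction of Theorem~\ref{thm:finite} to the proposition having already been carried out before its statement.
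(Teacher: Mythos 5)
Your proof is correct and follows essentially the same route as the paper's: both reduce to Wiegmann's trace criterion, promote the integer exponents in each word to real parameters so that the traces become real-analytic functions that factorize over the tensor product, and then cancel the $C$-factor by showing it is not identically zero (you via a boundary limit $s_i\to 0^+$ exploiting the full-rank hypothesis, the paper by evaluating directly at $s=0$ with the convention $0^0=0$). The only cosmetic difference is that the paper assigns one real parameter per maximal run of a letter in the word rather than one per variable index.
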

The main technical tool we will make use of is Wiegmann's theorem \cite{Wiegmann1961}, which is a generalization of Specht's theorem \cite{Specht1940} (see the survey by Shapiro \cite{Shapiro1991} for more results on unitary invariance). 
To state it, we define a \emph{word} $w(x_0,\ldots,x_m)$ as a monomial in the $m+1$ non-commuting variables $x_i$. 
For example, one possible word for $m=2$ is given by $x_0^0 x_2^2 x_1^1 x_0^4$ (we do not assume that $x^0=1$).
\begin{theorem}[Wiegmann \cite{Wiegmann1961}]\label{thm:wiegmann} Let $(A_0,\ldots,A_m)$ and $(B_0,\ldots,B_m)$ be two tuples of complex $d\times d$-matrices. Then there exists a unitary matrix $U$ such that
	$U A_i U^\dagger = B_i$ for all $i=0,\ldots,m$ if and only if
	\begin{align}
		\Tr&[w(A_0,A_0^\dagger,A_1,A_1^\dagger,\ldots,A_m,A_m^\dagger)]\nonumber \\
		\quad&=\Tr[w(B_0,B_0^\dagger,B_1,B_1^\dagger,\ldots,B_m,B_m^\dagger)]
	\end{align}
	for every word $w$ in $2(m+1)$ variables. 
\end{theorem}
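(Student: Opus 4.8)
The plan is to prove the nontrivial (``if'') direction by promoting the scalar trace data into an honest algebra isomorphism and then showing that isomorphism is spatially implemented. The converse (``only if'') is immediate: if $UA_iU^\dagger=B_i$ then also $UA_i^\dagger U^\dagger=B_i^\dagger$, so every word satisfies $w(B,B^\dagger)=U\,w(A,A^\dagger)\,U^\dagger$, and unitary invariance of the trace gives equality of the traces.

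For the forward direction, let $\mcA,\mcB\subseteq M_d(\CC)$ be the unital $*$-subalgebras generated by $\{A_i\}$ and $\{B_i\}$; their elements are linear combinations of words $w(A,A^\dagger)$ and $w(B,B^\dagger)$. I would define $\Phi\colon\mcA\to\mcB$ by $w(A,A^\dagger)\mapsto w(B,B^\dagger)$, extended linearly. The crux is well-definedness: that a vanishing combination of $A$-words forces the matching $B$-combination to vanish. This follows from definiteness of the Hilbert--Schmidt form: for $Y=\sum_k c_k w_k(B,B^\dagger)$,
\begin{align}
\Tr[Y^\dagger Y]=\sum_{k,l}\overline{c_k}c_l\,\Tr[w_k(B,B^\dagger)^\dagger w_l(B,B^\dagger)],
\end{align}
and since each $w_k^\dagger w_l$ is again a word in $B,B^\dagger$, the hypothesis identifies this with $\Tr[X^\dagger X]$ for $X=\sum_k c_k w_k(A,A^\dagger)$. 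Thus $X=0$ gives $\Tr[Y^\dagger Y]=0$, hence $Y=0$; exchanging the roles of $A$ and $B$ makes $\Phi$ a bijection. The remaining properties are routine: $\Phi$ is multiplicative on words and fixes $\id=A_i^0$, it intertwines adjoints because conjugate-reversal of a word corresponds to the operator adjoint, and it preserves the ambient trace --- the latter being exactly the hypothesis applied to each single word. Hence $\Phi$ is a unital, trace-preserving $*$-isomorphism with $\Phi(A_i)=B_i$.

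It remains to show that every such $\Phi$ is spatially implemented. Here I would invoke the Artin--Wedderburn decomposition $\mcA\cong\bigoplus_j M_{n_j}(\CC)$: its embedding into $M_d(\CC)$ splits $\CC^d$ into blocks on which $\mcA$ acts as $M_{n_j}(\CC)\otimes\id_{m_j}$ with multiplicities $m_j$ satisfying $\sum_j m_j n_j=d$, and similarly for $\mcB$ with multiplicities $m_j'$. A minimal projection in the $j$-th summand has ambient trace $m_j$ and is sent by $\Phi$ to a minimal projection of the corresponding summand of $\mcB$, so trace preservation forces $m_j=m_j'$ for all $j$. With both the abstract sizes $n_j$ and the multiplicities $m_j$ matching, the two embeddings of $\bigoplus_j M_{n_j}(\CC)$ are unitarily equivalent, yielding a unitary $U$ with $\Phi(X)=UXU^\dagger$ for all $X\in\mcA$; in particular $UA_iU^\dagger=B_i$.

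The main obstacle is the well-definedness of $\Phi$: the downstream passage to a unitary is standard finite-dimensional $C^*$-algebra theory, whereas converting the scalar identities ``all word-traces agree'' into an isomorphic identification of the two generated $*$-algebras is the genuine content of the theorem, and the Hilbert--Schmidt positivity argument above is precisely the device that accomplishes this transfer.
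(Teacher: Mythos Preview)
The paper does not supply its own proof of Wiegmann's theorem; it is quoted with a citation to \cite{Wiegmann1961} (and the survey \cite{Shapiro1991}) and then used as a black box in the proof of Proposition~\ref{prop:main}. So there is nothing in the paper to compare your argument against.

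That said, your proof is correct and is essentially the standard modern route to the result. The Hilbert--Schmidt positivity step is exactly the right device to upgrade the scalar word-trace identities to a well-defined, trace-preserving $*$-isomorphism $\Phi$ between the generated $*$-algebras, and the Artin--Wedderburn multiplicity-matching argument is the clean way to see that $\Phi$ is spatially implemented by a single unitary on $\CC^d$. One small point worth making explicit: the generated $*$-algebra need not contain the ambient identity $\id$, so rather than writing $\id=A_i^0$ (which in the paper's conventions is the support projection, not $\id$), it is cleaner to note that the algebra unit $e_{\mcA}$ is a projection with $\Tr[e_{\mcA}]=\Tr[\Phi(e_{\mcA})]=\Tr[e_{\mcB}]$, so the complementary subspaces have equal dimension and the unitary obtained on $e_{\mcA}\CC^d\to e_{\mcB}\CC^d$ extends to all of $\CC^d$. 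This is cosmetic and does not affect the validity of your argument.
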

Since only positive operators appear in our case, we restrict ourselves to words in $m+1$ variables. Now we fix an arbitrary word $w(x_0\dots,x_m)$ of $(m+1)$ variables and show that the conditions of the proposition imply
\begin{equation}
\begin{aligned}
	\Tr[w(A^{(0)},\ldots,A^{(m)})] = \Tr[w(B^{(0)},\ldots,B^{(m)})].
\end{aligned}
\end{equation}
Then Wiegmann's theorem implies that the required unitary $V$ exists. 

We now define the length $L$ of the word $w$ as follows: We write out the word in the form $\ell_1^{n_1}\cdots \ell_L^{n_L}$, where each letter $\ell_j$ is taken from $(x_0,\ldots,x_m)$ and no two adjacent letters are the same.
To give an example, if $m=2$ and our word is $w(x_0,x_1,x_2)=x_0^2 x_1^3x_0^4$ we have 
\begin{align}
    x_0^2 x_1^3x_0^4 =x_0^{n_0} x_1^{n_1} x_0^{n_3},
\end{align}
so that the word has $L=3$ letters. 

\begin{definition}
	Given a word $w$ of length $L$ and a tuple of $m$ positive semi-definite matrices $(A_0,\ldots,A_{m})$ we define a function $f_w(s|A_0,\ldots,A_{m}):\mathbb R^L \rightarrow \mathbb C$ by replacing each exponent $n_j$ with a real variable $s_j$ in the word $w(A_0,\ldots,A_m)$ and taking the trace.
\end{definition}

In the definition, we need to specify how we take real powers of positive semi-definite matrices, which is done by functional calculus (with $0^0=0$). Note that well-definedness of $s\mapsto X^s$ for a normal matrix $X$ requires positivity of all eigenvalues, i.e., positive semi-definiteness of $X$.

If we consider our previous example $w(x_0,x_1,x_2)=x_0^0 x_2^2 x_1^1 x_0^4$, then the function we now consider is given by
\begin{align}
	\RR^4\ni s \mapsto f_w(s|A_0,A_1,A_2) = \Tr[A_0^{s_1} A_2^{s_2} A_1^{s_3} A_0^{s_4}].
\end{align}
Note that a word of $k$ variables may only involve a smaller number of variables. For example $w(x_0,x_1,x_2)=x_1$. 
We call the variables that actually appear in a word the \emph{participating} variables. 

\begin{lemma}\label{lemma}
	Let $w$ be a word of length $L$ and of $2(m+1)$ variables. Then the function $f_w(s|A_0,\ldots,A_{m})$ has the following properties:
	\begin{enumerate}
		\item\label{prop:factorization} Factorization: $f_w(s|A_0\otimes C_0,\ldots,A_{m}\otimes C_{m  }) = f_w(s|A_0,\ldots,A_{m})f_w(s|C_0,\ldots,C_{m})$
		\item\label{prop:analyticity}  Analyticity: $s\mapsto f_w(s|A_0,\ldots,A_{m})$ is analytic. 
		\item\label{prop:normalization} $f_w(0,\ldots,0|A_0,\ldots,A_{m})\geq \mathrm{rank}(A_0)$ if $A_i$ has full rank for $i\geq 1$. 
	\end{enumerate}
\end{lemma}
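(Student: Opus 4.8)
The plan is to verify the three properties by unwinding the definition of $f_w$ and invoking the functional calculus for positive semi-definite matrices, paying attention to the convention $0^0=0$. Throughout I write the word as $\ell_1^{s_1}\cdots\ell_L^{s_L}$ with $\ell_j\in\{x_0,\dots,x_m\}$ and no two adjacent letters equal (since the operators are positive and hence self-adjoint, a word in $2(m+1)$ variables collapses to one of this form).

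For the factorization property~\ref{prop:factorization}, I would first record that $(A\otimes C)^s = A^s\otimes C^s$ for positive semi-definite $A,C$ and every $s\in\RR$: this holds because $A\otimes\id$ and $\id\otimes C$ are commuting positive semi-definite operators, and at $s=0$ one has $(A\otimes C)^0 = A^0\otimes C^0$ since the support of $A\otimes C$ is $\mathrm{range}(A)\otimes\mathrm{range}(C)$, in accordance with the convention $0^0=0$. Substituting $A_{\ell_j}\otimes C_{\ell_j}$ for the letter $\ell_j$ and multiplying out, $\prod_j (A_{\ell_j}\otimes C_{\ell_j})^{s_j} = \big(\prod_j A_{\ell_j}^{s_j}\big)\otimes\big(\prod_j C_{\ell_j}^{s_j}\big)$, and $\Tr[X\otimes Y] = \Tr[X]\,\Tr[Y]$ yields the claim.

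For analyticity~\ref{prop:analyticity}, I would use that a positive semi-definite $X$ with spectral decomposition $X = \sum_{\lambda\in\spec(X)}\lambda P_\lambda$ satisfies $X^s = \sum_{\lambda>0}\lambda^s P_\lambda$ — the zero eigenvalue drops out precisely because $0^0=0$ makes $0^s\equiv 0$. Each term $\lambda^s = \e^{s\log\lambda}$ is entire in $s$ and the sum is finite, so $s\mapsto X^s$ is an entire matrix-valued function; hence $f_w$, a finite product of such functions (with each $s_j$ entering exactly one factor) followed by a trace, is real-analytic on $\RR^L$ (indeed extends to an entire function of $s\in\CC^L$). It is worth stressing that the convention $0^0=0$ is what simultaneously makes real powers compatible with tensor products and keeps $f_w$ analytic across the hyperplanes $s_j=0$; with $0^0=1$ both would fail.

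For the normalization property~\ref{prop:normalization}, evaluate at $s=0$: $f_w(0,\dots,0|A_0,\dots,A_m) = \Tr\big[\prod_j A_{\ell_j}^0\big]$, where $A_i^0$ is the support projection of $A_i$. By hypothesis $A_i^0 = \id$ for $i\geq 1$, so each factor is either $\id$ or the support projection $P_0$ of $A_0$; as all these operators commute and $P_0^2=P_0$, the product equals $\id$ if $x_0$ does not participate in $w$ and equals $P_0$ otherwise, so the trace is at least $\Tr[P_0] = \mathrm{rank}(A_0)$ in both cases. None of the three steps is genuinely difficult; the only point requiring care — and the reason the $0^0=0$ convention was adopted — is the behaviour of $s\mapsto X^s$ at vanishing exponents, which underlies both the factorization at $s_j=0$ and the analyticity.
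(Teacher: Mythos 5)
Your proof is correct and follows essentially the same route as the paper's: factorization via $(A\otimes C)^s=A^s\otimes C^s$ together with $\Tr[X\otimes Y]=\Tr[X]\,\Tr[Y]$, analyticity via the spectral decomposition of each $A_i^{s_j}$ as a finite sum of entire functions, and a case split on whether $x_0$ participates for the normalization bound. One small inaccuracy in your closing aside: under the convention $0^0=1$ only analyticity would break, since $(A\otimes C)^0=\id=A^0\otimes C^0$ would still hold, so the tensor-compatibility of real powers survives either convention.
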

\begin{proof}
	The first property (factorization) follows from $\tr(X\otimes Y) = \tr X \, \tr Y$ and from $(X\otimes Y)^s = X^s\otimes Y^s$, which only makes sense for positive $X$ and $Y$.
	Analyticity follows from the fact that $s\mapsto A^s$ is an analytic function (in contrast to the function $A\mapsto A^s)$. Writing out $f_w(s|A_0,\ldots,A_m)$ in terms of the spectral decompositions of the involved matrices shows that it is a 
	finite sum of finite products of analytic functions and hence analytic. 
	For the last property, there are two cases: If $A_0$ participates in the word $w(A_0,\ldots,A_m)$, then $f_w(0,\ldots,0|A_0,\ldots,A_m) = \mathrm{rank}(A_0)$, since $A_i^0=\id$ for $i\geq 1$ and $A_0^0=P_{\mathrm{supp}(A_0)}$.
	Otherwise, $f_w(0,\ldots,0|A_0,\ldots,A_m)$ coincides with the dimension $d$ of the matrices since all participating matrices have full rank.
\end{proof}

Now fix an arbitrary word $w$ of length $L$. By taking appropriate (real) powers of the eqs. $U A^{(i)}\otimes C^{(i)} U^\dagger=B^{(i)}\otimes C^{(i)}$, multiplying them appropriately and taking the trace, we find
\begin{align}
	f_w&(s|A^{(0)}\otimes C^{(0)},\ldots,A^{(m)}\otimes C^{(m)}) \nonumber\\
	&= f_w(s| B^{(0)}\otimes C^{(0)},\ldots,B^{(m)}\otimes C^{(m)}). 
\end{align}
By the factorization property in Lemma~\ref{lemma}, we find
\begin{align}
	&\left[	f_w(s|A^{(0)},\ldots, A^{(m)})-	f_w(s|B^{(0)},\ldots,B^{(m)})\right] \nonumber\\
	&\quad\quad\quad\quad\times f_w(s|C^{(0)},\ldots,C^{(m)})=0.
\end{align}
Since all involved functions are analytic by Lemma~\ref{lemma}, we have a product of two analytic functions that vanishes identically.
This implies that at least one of the factors needs to vanish identically.
But by property~\ref{prop:normalization} in Lemma~\ref{lemma} and the requirement that $C^{(i)}$ has full rank for $i\geq 1$, we know that $f_w(s|C^{(0)},\ldots,C^{(m)})$ cannot vanish identically.
Therefore 
\begin{align}\label{eq:f-conclusion}
	f_w(s|A^{(0)},\ldots,A^{(m)})=f_w(s|B^{(0)},\ldots,B^{(m)})
\end{align}
for all $s\in \mathbb R^L$.
Hence, in particular, we have
\begin{equation}
\begin{aligned}
	\Tr[w(A^{(0)},\ldots,A^{(m)})] = \Tr[w(B^{(0)},\ldots,B^{(m)})].
\end{aligned}
\end{equation}
which we wanted to show. 

We close this section with a remark about Theorem~\ref{prop:main}. One may wonder whether the condition that the $C^{(i)}$ have full rank for $i\geq 1$ is necessary.
Indeed at least some requirement on the support of the matrices is necessary. For example, let $A^{(0)} = \sigma_x, B^{(0)}= \sigma_y, A^{(1)} = \sigma_x, B^{(1)} = \sigma_z$.
Then for each $i=1,2$ there exist \emph{distinct} unitaries $U_i$ such that $U_i A^{(i)} U_i^\dagger = B^{(i)}$ but not a single unitary that works for both $i=1$ and $i=2$.
However, by choosing $C^{(i)}=\proj{i}$ and $U = \sum_i U_i \otimes \proj{i}$, all conditions of the proposition apart from the condition on the rank would be fulfilled. 
It is also not sufficient that all the $C^{(i)}$ have pairwise overlapping supports: One can come up with examples of three pairs of Hermitian matrices $(A^{(i)},B^{(i)})$ such that every \emph{two} pairs of the matrices are jointly unitarily equivalent, but not all three of them at once, and at the same time there exists three positive matrices with pairwise overlapping supports $C^{(i)}$ and a unitary $U$ such that $UA^{(i)}\otimes C^{(i)}U^\dagger = B^{(i)}\otimes C^{(i)}$ for $i=1,2,3$.
An example is given in the Appendix. 

\section{Good quantum reference frames degrade little} \label{sec:embezzlement} 
Theorem~\ref{thm:finite} implies that a quantum reference frame in a pure state that is used beneficially necessarily degrades by building up correlations. 
In this section, we show a counter-statement: if the quantum reference frame can be used to
implement approximately unitary dynamics on $S$, then it can be used in such a way that it hardly degrades 
(of course, one can always use it in a non-optimal way as well). 
Before coming to the Theorem, let us explain heuristically why we can expect this to be true. 
Label the reference frame by $C$ and suppose that the joint dynamics on $SC$ is unitary. 
Let the initial state of $C$ by $\ket\phi$ and consider two orthogonal states $\ket{\psi_1}$ and $\ket{\psi_2}$ on $S$. 
Then if the induced dynamics on $S$ is approximately given by the unitary $V$, we must have:
\begin{align}
	U\ket{\psi_1}\ket\phi &\approx (V\ket{\psi_1})\ket{\phi'_1}\\
	U\ket{\psi_2}\ket\phi &\approx (V\ket{\psi_2})\ket{\phi'_2}.
\end{align}
But by linearity, we must also have
\begin{align}
	U(\alpha \ket{\psi_1}+\beta \ket{\psi_2})\ket\phi &\approx \alpha (V\ket{\psi_1})\ket{\phi'_1}+\beta (V\ket{\psi_2})\ket{\phi'_2}\nonumber\\
	&\approx  (\alpha V\ket{\psi_1}+\beta V\ket{\psi_2})\ket{\phi'_3}.
\end{align}
This is only possible if $\ket{\phi'_1}\approx\ket{\phi'_2}\approx\ket{\phi'_3}$, i.e. the final state of $C$ only depends little on the initial state on $S$ (up to global phases).  
We see here a particular example of the \emph{information-disturbance trade-off} in quantum mechanics \cite{Fuchs1996,Fuchs1998,Kretschmann2008}:
If only little information can be gained about the initial state on $S$ by measuring $C$, then the disturbance must be small.
Conversely, if there is little disturbance on $S$, then only little information about the initial state of $S$ can be gained by measuring the final state of $C$. 

However, what the discussion so far hasn't shown is why the final state on $C$ is close to its initial state.
$C$ could be rotated unitarily as well. One could hope this unitary to be covariant so that it could be undone by a (unitary) covariant quantum channel.
But the whole point of using the reference frame is to be able to implement on $S$ a unitary that is \emph{not} covariant, so why should the effective unitary on $C$ be covariant?
Nevertheless, we will show that $C$ can be brought back close to its initial state using a covariant quantum channel.
The intuitive explanation for why this is the case is the following:
Since the unitary on $S$ is not covariant, it must fail to commute with at least some conserved quantity, meaning that the quantity is, in fact, not conserved under the dynamics induced by $U$. 
Implementing the unitary therefore requires a coherent source or sink with which the quantity can be exchanged.
However, our considerations above show that the final state on $S$ is essentially independent of how much of the conserved quantity is exchanged with $S$.  
This is possible only if the amount of the conserved quantity in $C$ is highly uncertain, to begin with (which in turn requires $C$ to be much larger than $S$ in the sense that it can store much larger amounts of the conserved quantity). 
But in this case, any unitary rotation induced by $S$ on $C$ must leave the state approximately invariant. 
This explanation leaves many open questions, in particular, if there are several non-commuting conserved quantities, but it provides a rough idea. 
Let us now come to our theorem. To state it, we use the \emph{diamond-norm} for quantum channels, which is defined as:
\begin{align}
	\norm{T_1-T_2}_\Diamond := \sup_{n\in\NN} \sup_\rho \norm{(T_1\otimes \Id_n)[\rho] - (T_2\otimes \Id_n)[\rho]}_1,\nonumber
\end{align}
where $\Id_n$ denotes the identity quantum channel on an $n$-dimensional quantum system and $\rho$ denote arbitrary quantum states on the joint system.
The diamond-norm quantifies the distinguishability of two channels on a given system taking into account possible correlations to other systems. 
\begin{theorem}\label{thm:qrf-back-action}
	Consider quantum systems $S$ and $C$ with projective unitary representations of some (arbitrary) group $G$, a covariant quantum channel $T$ on $SC$ and a state $\sigma_C$ on $C$ such that the induced channel on $S$
	\begin{align}
		T_S[\rho_S] = \tr_C[T[\rho_S\otimes\sigma_C]]
	\end{align}
	is close to unitary: $\norm{T_S-V(\cdot)V^\dagger}_\Diamond \leq \epsilon$. 
	Then there exists a covariant quantum channel $T'$ on $SC$ with identical induced dynamics on $S$, $T'_S=T_S$, but such that
	\begin{align}
	   D(\tr_S[T'[\rho_S\otimes\sigma_C]], \sigma_C) \leq 2\sqrt{2\epsilon},
	\end{align}
	for all states $\rho_S$ on $S$. 
 \end{theorem}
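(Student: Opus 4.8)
The plan is to dilate the covariant channel to a covariant unitary, apply the Kretschmann–Schlingemann–Werner (KSW) continuity theorem for Stinespring representations \cite{Kretschmann2008} to turn ``$T_S$ is near-unitary'' into an operator-norm statement about the dilating isometry, and then build $T'$ by running the dilating unitary \emph{forward} on $S$ to produce the $S$-output and \emph{backward} on a fresh, symmetric copy of $S$ to undo the disturbance on $C$. First I would use the covariant Stinespring theorem \cite{Keyl1999} to write $T[\cdot]=\tr_E[U(\cdot\otimes\proj{\eta}_E)U^\dagger]$ with $U$ a covariant unitary on $SCE$ and $\proj\eta_E$ symmetric (pure); $E$ may be enlarged at will. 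Purifying $\sigma_C$ by a unit vector $\ket\chi_{CC'}$ with $\tr_{C'}\proj\chi=\sigma_C$, the isometry $W\ket\psi:=(U\otimes\id_{C'})(\ket\psi_S\ket\chi_{CC'}\ket\eta_E)$ is a Stinespring dilation of $T_S$ (one checks $\tr_{CC'E}[W\rho_S W^\dagger]=T_S[\rho_S]$). Since $\norm{T_S-V(\cdot)V^\dagger}_\Diamond\le\eps$, the KSW theorem yields a unit vector $\ket\theta_{CC'E}$ with $\norm{W-V\otimes\ket\theta}_\infty\le\sqrt{2\eps}=:\delta$ (enlarging $E$ so that $\mc{H}_{CC'E}$ can host the dilations occurring in KSW).

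Next I would extract two approximate intertwiners, which are the quantitative form of the information–disturbance trade-off alluded to in the heuristic. Writing $\theta_{CE}:=\tr_{C'}\proj\theta$, applying $W$ to $\ket\psi_S$ — respectively $(U^\dagger\otimes\id_{C'})$ to $(V\ket\psi)_S\otimes\ket\theta$ and substituting $\ket\psi\to V^\dagger\ket\phi$ — then passing from vectors to density operators via $D(\proj a,\proj c)\le\nnorm{\ket a-\ket c}$ for unit vectors, tracing out $C'$, and using convexity in $\rho_S$, I obtain for every state $\rho_S$ on $S$
\begin{align}
	D\bigl(U(\rho_S\otimes\sigma_C\otimes\proj\eta_E)U^\dagger,\ (V\rho_S V^\dagger)_S\otimes\theta_{CE}\bigr)&\le\delta,\nonumber\\
	D\bigl(U^\dagger(\rho_S\otimes\theta_{CE})U,\ (V^\dagger\rho_S V)_S\otimes\sigma_C\otimes\proj\eta_E\bigr)&\le\delta.\nonumber
\end{align}
The second line is the crucial one: feeding $\theta_{CE}$ to $U^\dagger$ returns $\sigma_C$ on $C$ essentially independently of the $S$-input.

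Now I would define $T'$: adjoin $\proj\eta_E$ and a fresh copy $S_1$ of $S$ in the (symmetric) maximally mixed state $\id_{S_1}/d_S$; apply $U$ to $SCE$; then apply $U^\dagger$ to $S_1CE$ (identifying $\mc{H}_{S_1}\simeq\mc{H}_S$); finally discard $E$ and $S_1$. Each step is covariant, so $T'$ is a covariant channel on $SC$. Because $U^\dagger$ on $S_1CE$ does not touch the original $S$-register, one has $\tr_C\circ T'=\tr_C\circ T$ exactly, hence $T'_S=T_S$. For the back-action, note that after the first $U$ the reduced state on $S_1CE$ equals $\id_{S_1}/d_S\otimes\tr_S[U(\rho_S\otimes\sigma_C\otimes\proj\eta_E)U^\dagger]$, which by the first identity is $\delta$-close to $\id_{S_1}/d_S\otimes\theta_{CE}$; applying $U^\dagger$ and invoking the second identity with $\rho_S\rightsquigarrow\id_{S_1}/d_S$ (so that $V^\dagger(\id/d_S)V=\id/d_S$) makes the $C$-marginal $2\delta$-close to $\sigma_C$. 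Monotonicity of $D$ under the remaining partial traces then gives $D(\tr_S[T'[\rho_S\otimes\sigma_C]],\sigma_C)\le 2\delta=2\sqrt{2\eps}$, as claimed.

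\textbf{Main obstacle.} The hard part is the construction of $T'$, not the estimates. The naive idea — post-processing only the $C$-output of $T$ by a covariant channel — cannot work in general: a wastefully chosen $T$ may discard $C$ and re-prepare a \emph{symmetric} state, so its $C$-output can be far more symmetric than $\sigma_C$, and covariant channels never restore asymmetry. The resolution is to re-route rather than post-process: run the dilation forward on the real $S$ to generate the $T_S$-output, and backward on a dummy symmetric $S$-register to undo the disturbance on $C$. Making this simultaneously give $T'_S=T_S$ \emph{exactly} and small back-action is the delicate point, and it relies precisely on the two KSW-based approximate intertwiners — in particular on the fact that a near-unitary $T_S$ forces $U$ to act on the $CE$ block almost independently of the $S$-input.
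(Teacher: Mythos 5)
Your proof is correct and takes essentially the same route as the paper's: covariant Stinespring dilation plus purification of $\sigma_C$, the Kretschmann--Schlingemann--Werner information--disturbance bound, and a covariant recovery obtained by running the dilating unitary backwards on a fresh maximally mixed register --- which is exactly the Hilbert--Schmidt adjoint of the doubly stochastic induced channel $\hat T_{\id_S/d_S}$ that the paper uses as its recovery map $R$. The remaining differences are presentational (the paper reduces to unitary $T$ and pure $\sigma_C$ first and runs the estimates through fidelity, concavity and the Fuchs--van de Graaf inequality, whereas you work directly in trace distance with the two approximate intertwiners), and both arrive at the same bound $2\sqrt{2\epsilon}$.
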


In contrast to Theorem~\ref{thm:finite}, this result does not transfer to the resource theory of athermality since even if $T$ is a thermal operation, we currently cannot guarantee that $T'$ also is.
We leave this as an open problem.

The fact that the system $C$ can be used in such a way that it changes little is reminiscent of the phenomenon of \emph{embezzlement} in the theory of entanglement \cite{van_dam2003universal}. There, a suitable resource state (the embezzler) may be used to implement, via local unitary operations, arbitrary state-transitions between pure states on a finite dimensional bipartite quantum system while experiencing arbitrary little back-action.
Embezzlers also exist in other resource theories, including asymmetry and athermality, see for example \cite{lipka-bartosik_catalysis_2023} for a comprehensive discussion.
From this point of view, Theorem~\ref{thm:qrf-back-action} shows that if a quantum reference frame is sufficiently good to implement arbitrary  unitaries to high precision, then it must be an embezzler. However, we emphasize that embezzlement is usually concerned with the ability to enable arbitrary state transitions with little back-action, whereas Theorem~\ref{thm:qrf-back-action} is concerned with implementing arbitrary \emph{unitary quantum channels} with little back-action.

A particularly interesting obstruction arising from conservation laws is the impossibility of measuring exactly (in the sense of von Neumann) an observable that does not commute with a conserved quantity. 
This statement is known as Wigner-Araki-Yanase (WAY) theorem \cite{Wigner1952,Araki1960,Yanase1961}, and has recently been re-interpreted and generalized in several ways (see, for example, \cite{Ozawa2002,Ozawa2002a,Busch2011,Marvian2012a,Miyadera2016,Ahmadi2013,Mohammady2021,Kuramochi2022}).
From the point of view of the resource theory of asymmetry, 
the theorem may be understood as saying that the measurement apparatus implementing an approximate measurement of an observable that does not commute with a conserved quantity must include a subsystem that serves as a quantum reference frame. 
Our theorem then implies that a measurement apparatus that can be used to implement such a measurement to very high accuracy can also be used so that its internal quantum reference frame hardly degrades. 
In particular, the measurement apparatus may be used again once its pointer is reset.

\subsection{The proof of theorem~\ref{thm:qrf-back-action}}
The proof consists of several steps:
First, we restrict to the case where $T$ is a unitary quantum channel, and $\sigma_C$ is pure.
Second, we use the quantitative version of the information-disturbance tradeoff from Ref.~\cite{Kretschmann2008} to argue that the final state on $C$ is almost independent of the input $\rho_S$ on $S$ and still approximately pure.
Third, since the final state on $C$ barely depends on $\rho_S$, we consider the situation in which $\rho_S=\id_S/d_S$ is maximally mixed. 
In this case, the dynamics on $C$ is covariant as well and doubly stochastic. 
We then show that when a doubly stochastic, covariant channel maps a pure state close to a pure state, there exists a covariant recovery map $R$ that maps the final state on $C$ close to the initial state $\sigma_C$. 
The channel $T'$ can now be chosen as $(I_S\otimes R)\circ T$ (with $I_S$ denoting the identity channel on $S$).
Finally, we use the covariant Stinespring theorem to generalize the result to the case where $T$ is not unitary, and $\sigma_C$ is mixed.

We now go through the steps in detail. So in the following, we assume that $T=U(\cdot)U^\dagger$ is a unitary, covariant quantum channel and $\sigma_C=\proj{\phi}$.
Refs.~\cite{Kretschmann2008,vomEnde} then shows that if $\norm{T_S-V(\cdot)V^\dagger}_\Diamond \leq \epsilon$ for some unitary $V$, then there exists a unitary $W$ such that
\begin{align}\label{eq:information-disturbance}
	\sup_{\norm{\ket\psi_S}=1} \norm{U \ket\psi_S\ket\phi_C - V\ket\psi_S W\ket\phi_C}^2\leq2\epsilon
\end{align}
We now make use of the fidelity of quantum states, defined as $F(\rho,\sigma) = \norm{\sqrt\rho\sqrt\sigma}_1=\Tr[\sqrt{\sqrt\rho \sigma \sqrt\rho}]$. 
It is non-decreasing under partial traces, so that $F(\rho_{SC},\sigma_{SC})\leq F(\rho_C,\sigma_C)$ for any two density matrices $\rho_{SC}$ and $\sigma_{SC}$. 
Moreover, for pure states, we have
\begin{align}
 	F(\Psi,\Phi) \ge 1- \frac12\norm{\ket\Psi-\ket\Phi}^2
\end{align}
where we use the short-hand notation $\Psi=\proj{\Psi}$. 
For a given state $\rho$ on $S$ let 
\begin{align}
	\hat T_\rho[\sigma_C] = \tr_S[U \rho\otimes\sigma_C U^\dagger]
\end{align}
be the final state on $C$. Then for a pure state $\rho=\proj\psi$ the information-disturbance tradeoff \eqref{eq:information-disturbance} implies 
\begin{align}
	F(\hat T_\psi[\sigma_C],W\sigma_C W^\dagger) \geq 1 - \epsilon.
\end{align}
However, if $\rho=\sum_i p_i \proj{\psi_i}$, then $\hat T_\rho[\sigma_C] = \sum_i p_i \hat T_{\psi_i}[\sigma_C]$. 
Since the fidelity is concave in its arguments, we, therefore, find 
\begin{align}\label{eq:fidelity-1}
	F(\hat T_\rho[\sigma_C],W\sigma_C W^\dagger) \geq 1-\epsilon
\end{align}
for all density matrices $\rho$ on $S$.
We now make use of a simple observation regarding the fidelity and so-called doubly stochastic quantum channels, i.e.\ quantum channels $T$ such that $T[\id]=\id$.
For a doubly stochastic quantum channel, its Hilbert-Schmidt dual $T^*$, defined by $\tr[AT[B]] = \tr[T^*[A]B]$ for all $A,B$, is also a quantum channel. 
Moreover, if $T$ is covariant, then so is $T^*$.
Now, if $\Psi=\proj\Psi$ and $\sigma$ is an arbitrary density matrix, then $F(\Psi,\sigma)^2 = \bra\Psi\sigma\ket\Psi=\Tr[\Psi\sigma]$. 
Using the definition of the Hilbert-Schmidt dual, we thus get
		\begin{align}
		F(T^*[\Psi],\Phi)^2 = \tr[T^*[\Psi]\Phi] = \tr[\Psi T[\Phi]] = F(\Psi,T[\Phi])^2\nonumber
		\end{align}
for any two pure states $\Psi=\proj\Psi$ and $\Phi=\proj\Phi$. 
The channel $\hat T_{\id_S/d_S}$ is doubly stochastic because it has a dilation in terms of a maximally mixed state. 
Therefore its dual $R:=(\hat T_{\id_S/d_S})^*$ is a covariant quantum channel and fulfills
\begin{align}\label{eq:fidelity-recovery}
	F(\sigma_C,R[W\sigma_C W^\dagger]) &= F(\hat T_{\id_S/d_S}[\sigma_C],W\sigma_C W^\dagger) \nonumber\\
	&\geq 1-\epsilon.
\end{align}	
In the following, we make use of the Fuchs-van de Graaf inequality \cite{Fuchs1999}
\begin{align}
	D(\rho,\sigma)\leq \sqrt{1-F(\rho,\sigma)^2}.
\end{align}
Together with \eqref{eq:fidelity-1} and \eqref{eq:fidelity-recovery}, it implies
\begin{align}
	D(\hat T_\rho[\sigma_C],W\sigma_C W^\dagger) \leq \sqrt{ 2\epsilon}
\end{align}
as well as
\begin{align}
	D(\sigma_C ,R[W\sigma_C W^\dagger])\leq \sqrt{2\epsilon}.
\end{align}
Then using a triangle-inequality and the data-processing inequality for the trace-distance yields
\begin{align}
	D(R\circ \hat T_\rho[\sigma_C],\sigma_C) &\leq D(R\circ \hat T_\rho[\sigma_C],R[W\sigma_C W^\dagger]) \nonumber\\
	&\quad+ D(R[W\sigma_C W^\dagger],\sigma_C)\\
	&\leq D(\hat T_\rho[\sigma_C],W\sigma_C W^\dagger) \nonumber\\
	&\quad +D(\sigma_C, R[W\sigma_C W^\dagger])\\
	&\leq 2\sqrt{2\epsilon}.
\end{align}
This finishes the proof in the case of unitary $T$ and pure $\sigma_C$ by setting $T'=(\Id_S\otimes R)\circ T$.
We now use this result to prove the general case, where $T$ need not be unitary and $\sigma_C$ need not be pure.
By the covariant Stinespring theorem, there exists a symmetric pure state $\omega_E = \proj{\chi}_E$ and a covariant \emph{unitary} channel $\tilde T$ on $SCE$ such that
\begin{align}
	T[\rho_{SC}] = \tr_E[\tilde T[\rho_{SC}\otimes\omega_E]].
\end{align}
Now consider a purification $\ket\Phi_{CC'}$ of $\sigma_C$ on a system $C'$. Then the channel $\tilde T\otimes \Id_{C'}$ is a covariant unitary channel on $SCEC'$ implementing $V(\cdot)V^\dagger$ on $S$ using a pure reference frame on $CEC'$ in the state $\proj{\Phi}_{CC'}\otimes \proj{\chi}_E$.
Since $\tilde T\otimes \Id_{C'}$ does not act on $C'$ and since this property is preserved when going to the Hilbert-Schmidt dual, by the previous reasoning, there is a covariant channel $\tilde R = \tilde R_{CE}\otimes \Id_{C'}$ on $CEC'$ such that the covariant quantum channel
\begin{align}
	\tilde T' &= (I_S\otimes \tilde R)\circ (\tilde T\otimes \Id_{C'}) = [(\Id_S\otimes \tilde R_{CE})\circ \tilde T]\otimes \Id_{C'}\nonumber\\
	&= \tilde T'_{SCE}\otimes \Id_{C'}
\end{align} 
leaves $CEC'$ approximately invariant and does not act on $C'$ while implementing $V(\cdot)V^\dagger$ with the same precision.
Because trace-distance is non-increasing under partial traces, the state on $C$ is also approximately invariant with at most the same error.
We can therefore define $T'$ by
\begin{align}
	T'[\rho_{SC}] = \tr_E\!\big[\tilde T'_{SCE}[\rho_{SC}\otimes\omega_E]\big],
\end{align}
which is covariant because $\omega_E$ is symmetric and $\tilde T'_{SCE}$ is covariant.
This finishes the proof in the general case.

\section{Discussion and outlook}
Our first result shows that a catalyst in the context of the resource theories of asymmetry and athermality must become correlated to some other degrees of freedom to be of any use.
The amount of correlations measured in terms of mutual information is precisely the change of entropy of system and environment combined, but our results
do not say anything about how these correlations can be distributed.
It is known in the resource theory of athermality that it can be quite beneficial to allow the catalyst to become (even only slightly) correlated to $S$, see Refs.~\cite{Gallego2016,Wilming2017a,Mueller2018,Rethinasamy2020,Shiraishi2021}. 
But do the same results hold if one bounds also the correlations with the thermal environment $E$, or is there a tradeoff, so that correlations to $E$ have to increase as those to $S$ decrease?

Our second result shows that a good quantum reference frame can be used in such a way that its state hardly changes. In particular, the result shows that a quantum reference frame that can implement any unitary transformation to high accuracy on a given system must automatically be an embezzler in the sense that it is almost catalytic \cite{Dam2003}. 
We have left the corresponding statement for the resource theory of athermality as an open problem, see also \cite{Woods2019} for related considerations.

Here, we have restricted ourselves to systems described by finite-dimensional Hilbert spaces.
It would be interesting to know whether similar results can be derived for infinite dimensional systems. 
This would allow for the treatment of unitary representations of non-compact and non-abelian groups, such as the Galilean or Poincaré group, which are relevant for treating reference frames for space-time localization and orientation.   

\emph{Note added.} Results closely related to Theorem~\ref{thm:qrf-back-action}, also using the quantitative information-disturbance tradeoff as a proof-technique,  have been previously derived in the context of so-called "no-programming theorems" for quantum computers \cite{Yang_2020,Gschwendtner_2021} and for estimating the fundamental energy costs of implementing unitary gates on a quantum computer \cite{Chiribella_2021,Yang_2022} (where the energy-source corresponds to the quantum reference frame in our formulation).
We thank Yuxiang Yang for making us aware of the close resemblance between these results and ours after the first version of this manuscript appeared as a preprint.

\emph{Acknowledgments. } We would like to thank Patryk Lipka-Bartosik, Victor Gitton, Nelly H. Y. Ng, Alexander Sottmeister, and Albert Werner for interesting discussions.
Support by the DFG through SFB 1227 (DQ-mat), Quantum Valley Lower Saxony, and funding by the Deutsche Forschungsgemeinschaft (DFG, German Research Foundation) under Germanys Excellence Strategy EXC-2123 QuantumFrontiers 390837967 is also acknowledged.

\bibliography{literature}
\appendix

\onecolumn
\section{The rank-condition in Proposition~\ref{prop:main}}

We here give the promised example of three pairs of Hermitian matrices $(A^{(i)},B^{(i)})$ such that every \emph{two} pairs of the matrices are jointly unitarily equivalent, but not all three of them at once, and at the same time there exist three positive matrices with pairwise overlapping supports $C^{(i)}$ and a unitary $U$ such that $UA^{(i)}\otimes C^{(i)}U^\dagger = B^{(i)}\otimes C^{(i)}$ for $i=1,2,3$.

We first describe a general way to find such examples and then give an explicit example.

We will choose $C^{(i)}$ as 
\begin{align}
    C^{(i)} = \proj{i} + \proj{i+1},
\end{align}
where addition is modulo $3$. Any product of the $C^{(i)}$ involving all the different $i$ is zero. Therefore, all words of the variables $A^{(i)}\otimes C^{(i)}$ where all the $i$ are participating are traceless.
Hence by Wiegmann's theorem, as long as every two pairs $(A^{(i)}, B^{(i)})$ are unitarily equivalent, then the $A^{(i)}\otimes C^{(i)}$ are jointly unitarily equivalent to the $B^{(i)}\otimes C^{(i)}$.

So we are looking for matrices such that there exists unitaries $U,V$ and $W$ with
\begin{align}
U A^{(i)} U^\dagger &= B^{(i)}, \quad i=1,2\\
V A^{(i)} V^\dagger &= B^{(i)}, \quad i=2,3\\
W A^{(i)} W^\dagger &= B^{(i)}, \quad i=3,1,
\end{align}
but we want that the Hermitian matrices $A^{(i)}$ and $B^{(i)}$ are not jointly unitarily equivalent for all $i=1,2,3$. For simplicity we choose $U=\id$ (this simply amounts to re-defining the $B^{(i)}$ so it does not limit the generality of what follows). 
The above equations imply that
\begin{align}
    A^{(1)} &= W A^{(1)} W^\dagger \\
    A^{(2)} &= V A^{(2)} V\\
    A^{(3)} &= V^\dagger W A^{(3)} W^\dagger V.
\end{align}
Let us define $\tilde U = V^\dagger W$, so that the equations read
\begin{align}
[A^{(1)}, W]=0,\ [A^{(2)},V]=0,\ [A^{(3)},\tilde U]=0
\end{align}
and
\begin{align}
    B^{(1)}=A^{(1)},\ B^{(2)}=A^{(2)},\ B^{(3)} = V A^{(3)} V^\dagger.
\end{align}
By Wiegmann's theorem, all we need now is two unitaries $V,W$ and three Hermitian matrices $A^{(i)}$ fulfilling the above conditions such that
\begin{align}
    \tr[B^{(1)}B^{(2)}B^{(3)}]&=\tr[A^{(1)}A^{(2)} V A^{(3)}V^\dagger]\nonumber\\ &\neq \tr[A^{(1)}A^{(2)} A^{(3)}].
\end{align}
An explicit such choice is as follows:
\begin{align}
A^{(1)}&=\begin{pmatrix}0 &0 &0\\
0&0 &0  \\
0&0 & 1\end{pmatrix},\quad
A^{(2)}=\begin{pmatrix}0 &0 &1\\
0&0 &0  \\
1&0 &0 \end{pmatrix},\quad
A^{(3)}=\mathrm{i}\begin{pmatrix}0 &\sqrt 3 &-\sqrt 3\\
-\sqrt 3&0 &\sqrt 3  \\
\sqrt 3&-\sqrt 3 &0 \end{pmatrix},\\
V &= \begin{pmatrix}0 &1 &0\\
1&0 &0  \\
0&0 &1 \end{pmatrix},\quad
W=\begin{pmatrix}0 &0 &1\\
0&1 &0  \\
1&0 &0 \end{pmatrix},
\end{align}
which yields
\begin{align}
    \left|\tr[B^{(1)}B^{(2)}B^{(3)}]-\tr[A^{(1)}A^{(2)} A^{(3)}]\right| = 2\sqrt{3}.
\end{align}

\end{document}